\newtheorem{lemma}{\textbf{Lemma}}{}
\newtheorem{definition}{\textbf{Definition}}{}
\title{\LARGE \bf
Stabilizing Error Correction Codes \\ for Controlling LTI Systems over  Erasure Channels
}
\author{Jan {\O}stergaard
\thanks{J. {\O}stergaard is with the Section on Artificial Intelligence and Sound, Department of Electronic Systems,  Aalborg University, 9220 Aalborg, Denmark.
        {\tt\small jo@es.aau.dk}}%
}
\begin{document}

\maketitle
\thispagestyle{empty}
\pagestyle{empty}

\begin{abstract}

We propose $(k,k')$ stabilizing codes, which is a type of delayless error correction codes that are useful for control over networks with erasures.
For each input symbol, $k$ output symbols are generated by the stabilizing code. Receiving any $k'$ of these outputs guarantees stability. Thus, the system to be stabilized is taken into account in the design of the erasure codes. Our focus is on LTI systems, and we construct codes based on independent encodings and multiple descriptions. The theoretical efficiency and performance of the codes are assessed, and their practical performances are demonstrated in a simulation study. There is a significant gain over other delayless codes such as repetition codes. 

\end{abstract}

\section{INTRODUCTION}

There has been a vast amount of literature on networked control systems over erasure channels, cf. \cite{tatikonda:,sinopoli:2004,jin:2006,imer:2006,sahai:2006,liu:2007,quevedo:2007,ostrovsky:2009,gupta:2009,trivellato:2010,elia:2011,quevedo:2011,garone:2012,yuksel:2013,quevedo:2014,nagahara:2014,2015IJC,ostergaard:2016,peters:2016,maas:2016,khina:2019,barforooshan:2020}. In \cite{sinopoli:2004},  it was shown that for a given unstable linear time invariant (LTI) system, there exists a critical limit on the packet dropout rate beyond which the system cannot be stabilized in the usual mean-square sense.  To go beyond this critical limit, several techniques have been proposed ranging from error correction codes~\cite{sahai:2006,ostrovsky:2009} and multiple descriptions~\cite{ostergaard:2016}  to packetized predictive control~\cite{quevedo:2007} to name a few.

Assume the output of the plant is to be encoded and transmitted over a digital erasure channel, where packets are either completely lost or received without errors. 
To recover from erasures,  error correction codes can be utilized \cite{singleton:1964,ostrovsky:2009}. Error-correction codes are  often designed with a certain loss rate of the channel in mind, and do not necessarily  rely on the plant (exceptions include the work in \cite{ostrovsky:2009} which tracks the plant state). For example, $(n,k)$ erasure channel codes, take $k$ \emph{source} packets and outputs $n$ \emph{channel} packets. If any $k$ of the channel packets are received, the original $k$ source packets can be completely recovered. If more than $k$ packets are received, the additional received data packets are not useful since they do not contain any further information about the plant state than what is already known. Finally, if less than $k$ packets are received, the source packets can generally not be recovered at all and all the transmitted information is in this case wasted. 

\begin{figure}
\centering
\scalebox{1}{
\begin{tikzpicture}

\draw   
   (2,3) rectangle (3,2)
	node[pos=0.5]{$P$};

\draw [->] (1.5,2.8) -- (2,2.8)
	node[pos=0.1,above]{$d$};
\draw [->] (1.5,2.5) -- (2,2.5)
	node[pos=0.1,left]{$x_0$};
\draw [->] (0,2.2) -- (2,2.2)
	node[pos=0.1,above]{$u$};
\draw [-] (3,2.2) -- (5,2.2)
	node[pos=0.9,above]{$y$};
\draw [->] (3,2.8) -- (3.5,2.8)
	node[pos=0.9,above]{$e$};

\draw [->] (5,2.2) -- (5,1) -- (4.7,1);
\draw [-] (0,2.2) -- (0,1) -- (0.3,1);

\draw   
   (0.3,0.7) rectangle (1.3,1.3)
	node[pos=0.5]{\footnotesize Dec}; 

\draw   
   (3.7,0.7) rectangle (4.7,1.3)
	node[pos=0.5]{\footnotesize Enc};

\draw [<-] (1.3,1) -- (1.9,1);
\draw [<-] (3.1,1) -- (3.7,1);

\draw   
   (1.9,0.7) rectangle (3.1,1.3)
	node[pos=0.5]{\footnotesize Channel};

\end{tikzpicture}
}

\caption{Noisy LTI system $P$ that is controlled over a digital channel \cite{silva:2016}.}
\label{fig:system}
\end{figure}
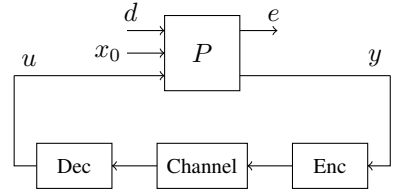

An alternative to error correction codes are multiple descriptions \cite{gamal:1982}, which combines source and channel coding. With multiple descriptions, the source is encoded into a number of descriptions, which are individually transmitted over the channel. There is no priority on the descriptions, and any subset of the descriptions can be jointly decoded to achieve a desired performance. 
Multiple descriptions were, for example, used for state-estimation in \cite{jin:2006} and combined with packetized predictive control in~\cite{ostergaard:2016}. One of the problems with multiple descriptions is that it is generally very hard to design good multiple-description codes. Another problem is that the descriptions generally contains redundant information except in the limit of vanishing data rates or when used in the extreme asymmetric situation, where the descriptions are prioritized and a successive refinement scheme is obtained. If one is able to construct a successive refinement source coder, then it was shown in \cite{yeung:1996} and \cite{puri:1999}, that the layers in the successive refinement code can be combined with traditional error correction codes in order to obtain a (sub optimal) multiple-description code. It was recently shown that a combination of successive refinement and multiple descriptions with feedback becomes   rate-distortion optimal under certain asymptotical conditions \cite{ostergaard:2021}.

We will in this paper focus on discrete-time LTI plants, stationary Gaussian disturbances, Gaussian initial state, scalar-valued control inputs and sensor outputs. Thus, the plant state can have an arbitrary dimensionality but the control signal as well as the output of the plant are both scalar valued. For such a system, the minimal information rate required to guarantee stability and a desired performance (measured in terms of the variance of the plant output) was completely characterized in \cite{silva:2016} for the case of commmunications over error-free digital channels. An illustration of the system is shown in Fig.~\ref{fig:system}.

We show that simple stabilizing erasure codes can be obtained from properly designed independent encodings~\cite{ostergaard:2021} or multiple descriptions~\cite{gamal:1982}. Specifically, for a given LTI plant we design a $(k,k')$ stabilizing code such that when combining any $k'$ descriptions of the code, the resulting $\mathrm{SNR}$ is above a critical limit, which guarantees that the decoded control signal contains sufficient information to stabilize the plant. We show that  simple codes based on independent encodings are asymptotically efficient for nearly stable plants. In general, for unstable plants, it is advantageous to use a design based on multiple descriptions. In a simulation study, we demonstrate that for the same sum-rate and delay, it is possible to achieve a significant gain in performance over that which is possible with repetition coding. 


\begin{figure}
\centering
\scalebox{1}{
\begin{tikzpicture}

\draw   
   (2,3) rectangle (3,2)
	node[pos=0.5]{$P$};

\draw [->] (1.5,2.8) -- (2,2.8)
	node[pos=0.1,above]{$d$};
\draw [->] (1.5,2.5) -- (2,2.5)
	node[pos=0.1,left]{$x_0$};
\draw [->] (0,2.2) -- (2,2.2)
	node[pos=0.1,above]{$u$};
\draw [-] (3,2.2) -- (5,2.2)
	node[pos=0.9,above]{$y$};
\draw [->] (3,2.8) -- (3.5,2.8)
	node[pos=0.9,above]{$e$};

\draw [->] (5,2.2) -- (5,1) -- (4.7,1);
\draw [-] (0,2.2) -- (0,1) -- (0.3,1);

\draw   
   (0.3,0.7) rectangle (1.3,1.3)
	node[pos=0.5]{$F$}; 

\draw   
   (3.7,0.7) rectangle (4.7,1.3)
	node[pos=0.5]{$L$};

\draw [<-] (1.3,1) -- (2.3,1)
	node[pos=0.4,below]{$w$};
\draw [<-] (2.7,1) -- (3.7,1)
	node[pos=0.6,below]{$v$};

\draw [->] (2.51,0.4) -- (2.51,0.8)
	node[pos=0.1,left]{$q$};

\draw   
	(2.1,1.4) rectangle (2.9,1.8)
	node[pos=0.5]{$z^{-1}$}; 

\draw (2.5, 1) circle (6pt)
	node[]{$+$};

\draw [->] (1.6,1) -- (1.6,1.6) -- (2.1,1.6);

\draw [->] (2.9,1.6) -- (4.2,1.6) -- (4.2,1.3);

\end{tikzpicture}
}

\caption{Linear system that models the system of Fig.~\ref{fig:system}.}
\label{fig:linear_system}
\end{figure}
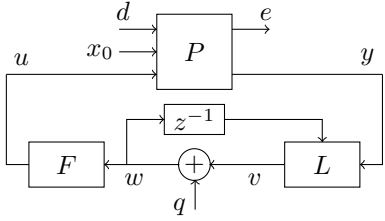

\section{BACKGROUND}

Let us begin by  considering the  networked control system presented in \cite{silva:2016}, and which is shown in Fig.~\ref{fig:system}. Here $P$ is an LTI plant that is open-loop unstable, $u$ is the scalar control input, and $y$ is the scalar  sensor output of the plant. The external disturbance is denoted by $d$ and $e$ is an error signal that is related to the output performance. The plant output $y$ is to be encoded by the causal encoder $\mathrm{Enc}$, transmitted over the ideal noise-less digital channel, and then decoded by the causal decoder $\mathrm{Dec}$. The encoder-decoder pair $(\mathrm{Enc},
\mathrm{Dec})$ also contains the controller. Thus, the output of the decoder is the control signal to  the plant. For a fixed data rate of the coder, the performance will be measured by the variance $\sigma_e^2$ of the output $e$. We have the following linear input-output relationship through the plant $P$:
\begin{equation}
\begin{bmatrix}
e \\
y
\end{bmatrix} 
= 
\begin{bmatrix}
P_{11} & P_{12} \\
P_{21} & P_{22} 
\end{bmatrix}
\begin{bmatrix}
d \\ u
\end{bmatrix}.
\end{equation}

It was shown in \cite{silva:2016}, that if the initial state $x_0$ and the external disturbances are arbitrarily colored but jointly Gaussian, then the optimal encoder-decoder pair constitute a linear system + noise. This implies that the system in Fig.~\ref{fig:system} can be modelled by the \emph{linear} system shown in Fig.~\ref{fig:linear_system}. In this system, $F$ and $L$ are LTI systems, and $q$ is an additive white Gaussian noise, which simulates the coding noise due to source coding. In this equivalent form, we have the following relationship \cite{silva:2016}:
\begin{equation}
u = Fw, \quad w= v+q, \quad v = L  
\begin{bmatrix}
z^{-1}w \\
y
\end{bmatrix}, L = [L_w, L_y],
\end{equation}
where $z^{-1}$ indicates a one-step delay operator.
The signal-to-noise ratio $\gamma$ of the system is defined as:
\begin{equation}
\gamma \triangleq \frac{\sigma_v^2}{\sigma_q^2}.
\end{equation}

It was shown in \cite{silva:2016}, that for any proper LTI filters $F$ and $L$ that makes the system in Fig.~\ref{fig:linear_system} internally stable and well-posed, we have the following explicit expressions:
\begin{align} \label{eq:gamma}
\gamma &= \|S - 1\|^2 + \frac{1}{\sigma_q^2}\| L_y P_{21} S\|^2 \\ \label{eq:perf}
\sigma_e^2 &= \|P_{11} + P_{12}K(1-P_{22}K)^{-1}P_{21}\|^2 + \|P_{12}FS\|^2 \sigma_q^2 \\ \label{eq:S}
S &= (1-L_w z^{-1} - P_{22} F L_y)^{-1}\\
K &= F L_y(1 - L_w z^{-1})^{-1}.
\end{align}
To find the optimal filters $(F,L)$ that minimizes $\sigma_e^2$ subject to a constraint on $\gamma$, one needs to solve a convex optimization problem \cite{silva:2016}. A lower bound on the minimal coding rate $R$ achievable when using optimal filters $(F,L)$ is given by:
\begin{equation}
R \geq \frac{1}{2}\log_2( 1 + \gamma).
\end{equation}
It is clear from \eqref{eq:gamma} that asymptotically as $\sigma_q^2\to \infty$, $\gamma \to \|S-1\|^2$, which shows that the minimum $\mathrm{SNR}$ required for stability is $\|S-1\|^2$, and the minimum rate required for stability is given by $\frac{1}{2}\log_2(1 + \|S-1\|^2)$. 

\section{CAUSAL CODERS}
The encoder $\mathcal{E}_i  : \mathbb{R}^i \times \mathcal{S}^i \to \mathcal{Y}^k$ at time $i$ is a (possibly) time-varying causal one-to-many map, which at each time instance
produces $k$ outputs, that is:
\begin{equation}
(y_c^{(1)}(i), \dotsc , y_c^{(k)}(i)) = \mathcal{E}_i (y^i, s^i), 
\end{equation}
where $y_c^{(j)}(i) \in \mathcal{Y}_c$ denotes the $j$th output of the encoder at time $i$, and $y^i = y_1,\dotsc, y_i$ indicates that the encoder is only using the sequence of current and past plant outputs. The sequence $s^i$ denotes side information. Thus, the encoder can be randomized via the side information, which for example allows one to obtain a stochastic encoder. The outputs of the encoder are discrete. However, by use of subtractive dithering techniques, the resulting reconstructed values at the decoder are continuous. With this, the quantizer can be modelled as an additive white noise source \cite{zamir:2014}. 

Let $\mathcal{I}(i) \subseteq \{1,\dotsc, k\}$ denote the set of indices of the received descriptions at time $i$. 
At each time instance, $k$ descriptions are produced and transmitted over the digital erasure channels. 
The set of causal decoders at time $i$ is $\mathcal{D}^{\mathcal{I}^i}_i :  \mathcal{Y}_c^{\mathcal{I}^i}  \times \mathcal{S}^{\mathcal{I}^i} \to \mathbb{R},\ \forall \mathcal{I}^i \subseteq \{1,\dotsc, k\}^i$. For a particular choice of decoder, say $\mathcal{D}^{\mathcal{I}^i}_i$, the reconstructed signal $u(i) \in \mathbb{R}$ at time $i$ is given by:
\begin{align}
u(i) = \mathcal{D}^{\mathcal{I}^i}_i (  y_c^{\mathcal{I}^i} , s^{\mathcal{I}^i}). 
\end{align}

Section IV considers lower bounds on the coding rates based on Gaussian coding schemes. The operational data rates obtained when using a practical coding scheme is generally greater than these lower bounds. These operational issues regarding the stabilizing codes are treated in the longer version of the paper \cite{ostergaard:2021b}. In particular, since we are here focusing on the situation with a scalar output, we need to use scalar quantizers. It is well known that scalar quantizers suffers from at rate-loss compared to vector quantizers except at very low bit rates. In addition, we need to entropy encode the output of the quantizer to further reduce the bitrate. Since the entropy coder is operating on one sample at a time, it will generally not be possible to reach the entropy of the output.

\section{STABILIZING ERROR CORRECTION CODES}

We will first introduce some definitions, which we will be needing in the sequel. 


\begin{definition}
We will denote by $(F,L,P,\gamma)$ a linear system on the form shown in Fig.~\ref{fig:linear_system}, which has coding rate $R=0.5\log_2(1+\gamma)$ and performance $D=\sigma_e^2$, where $\sigma_e^2$ is given by \eqref{eq:perf}.
\hfill$\triangle$
\end{definition}


\begin{definition}
 A $(k,k')$ stabilizing code for the system $(F,L,P, \gamma)$ produces $k$ descriptions such that using any $k'$ of them is sufficient to stabilize the system.
\hfill$\triangle$
\end{definition}




To quantify the efficiency of a $(k,k')$ stabilizing code when used on a particular system $(F,L,P,\gamma)$, we will compare the sum-rate $R_S$ of the $k$ descriptions to the rate $R$ required for a single-description code to achieve the same performance as that obtained when using all $k$ descriptions (without erasures). In the linear Gaussian case, the efficiency can be assessed by simple means as shown in the definition below.

\begin{definition}
The efficiency $\eta$ of a $(k,k')$ stabilizing code for the system $(F,L,P,\gamma)$ is defined as:
\begin{equation}\label{eq:eta}
\eta \triangleq  
\frac{ \log_2( 1 + \tilde{\gamma})}{k\log_2( 1  + \hat{\gamma})},\quad 0\leq \eta \leq 1,
\end{equation}
where $\hat{\gamma}$ is the $\mathrm{SNR}$ when using any single description out of the $k$ descriptions, and $\tilde{\gamma}$ is the $\mathrm{SNR}$ when combining all $k$ descriptions. \hfill$\triangle$
\end{definition}


When measuring efficiency in \eqref{eq:eta}, we need to make sure that we compare the coding rates of systems having similar performance (in terms of $\sigma_e^2$). The best performance of a $(k,k')$ stabilizing code is obtained when using all $k$ descriptions, which results in an $\mathrm{SNR}$ of $\tilde{\gamma}$. The rate of each description is $0.5\log_2(1+\hat{\gamma})$ and the sum-rate is $0.5k\log_2(1+\hat{\gamma})$. On the other hand, when not using a stabilizing code we need a coding rate of $0.5\log_2(1+\tilde{\gamma})$ to achieve an $\mathrm{SNR}$ of $\tilde{\gamma}$. 

For a classical $(n,k)$ error correction code that produces $n$ outputs for each $k$ input sample (or block of samples), the efficiency is $k/n$, and the delay is $k-1$  samples (blocks). A repetition code that duplicates the same source block $k$ times has efficiency $1/k$ and zero delay. The stabilizing codes that we propose are also delayless and are able to improve upon the efficiency of repetition codes  due to the property that descriptions can synergistically improve upon each other. 


\subsection{Stabilizing codes based on independent encodings}

\begin{definition} \label{def:encodings}
Let $w_i = v + q_i, i = 1,\dotsc, k$. If $w_j$ and $w_i, i\neq j,$ are conditionally independent given $v$, then we refer to $w_1,\dotsc, w_k$ as \emph{independent encodings}~\cite{ostergaard:2021}.
\end{definition}


\begin{lemma}
$\mathbf{(k,k')}$ \textbf{Stabilizing Code Based on Independent Encodings.}
\emph{Consider the system $(F,L,P,\gamma)$, which is illustrated in Fig.~\ref{fig:linear_system}. Let $v$ be Gaussian and let $w_i = v + q_i, i = 1, \dotsc, k$, be $k$ independent encodings of $v$,  
where $q_i, i=1,\dotsc, k,$ are mutually independent,  zero-mean Gaussian distributed, and having a common variance $\sigma^2$. If for some $1\leq k' \leq k$, the common variance satisfies
\begin{equation}\label{eq:code1}
\sigma^2\leq  \frac{\gamma k'\|L_yP_{21}S\|^{2}}{ \|S-1\|^{2} (\gamma - \|S-1\|^2)},
\end{equation}
where $S$ is given in \eqref{eq:S}, then $w_i, i=1,\dotsc, k$, form a $(k,k')$ stabilizing code for the system $(F,L,P,\gamma)$. }
\end{lemma}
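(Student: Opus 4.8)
The plan is to reduce the stability requirement to a single inequality on the effective signal-to-noise ratio obtained after combining any $k'$ of the $k$ encodings, and then to translate that inequality into the stated bound on the common noise variance $\sigma^2$ by means of the closed-form expression for $\gamma$ in \eqref{eq:gamma}.

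First I would compute the effective noise variance produced by combining $k'$ descriptions. Since the encodings satisfy $w_i = v + q_i$ with the $q_i$ mutually independent, zero-mean, and of common variance $\sigma^2$ (Definition~\ref{def:encodings}), averaging any $k'$ of them yields a sufficient statistic $\bar w = v + \bar q$ in which the residual noise $\bar q$ is zero-mean Gaussian with variance $\sigma^2/k'$. Equivalently, the noise precisions $1/\sigma^2$ of the $k'$ independent encodings add, so the combined reconstruction presents the loop with the nominal signal $v$ corrupted by noise of variance $\sigma^2/k'$, and its signal-to-noise ratio is $\gamma_{k'} = \sigma_v^2/(\sigma^2/k') = k'\sigma_v^2/\sigma^2$.

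Second I would invoke the stability characterization obtained from \eqref{eq:gamma}: letting $\sigma_q^2\to\infty$ shows that the minimum SNR compatible with internal stability of the system in Fig.~\ref{fig:linear_system} is exactly $\|S-1\|^2$. Hence $w_1,\dotsc,w_k$ form a $(k,k')$ stabilizing code precisely when the combined ratio meets this floor, i.e.\ when $\gamma_{k'} = k'\sigma_v^2/\sigma^2 \geq \|S-1\|^2$, which rearranges to $\sigma^2 \leq k'\sigma_v^2/\|S-1\|^2$. Finally I would rewrite this in terms of the given data: by definition $\gamma = \sigma_v^2/\sigma_q^2$, so $\sigma_v^2 = \gamma\sigma_q^2$, while \eqref{eq:gamma} gives $\gamma - \|S-1\|^2 = \|L_yP_{21}S\|^2/\sigma_q^2$, hence $\sigma_q^2 = \|L_yP_{21}S\|^2/(\gamma - \|S-1\|^2)$. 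Substituting $\sigma_v^2 = \gamma\|L_yP_{21}S\|^2/(\gamma-\|S-1\|^2)$ into $\sigma^2\leq k'\sigma_v^2/\|S-1\|^2$ yields exactly \eqref{eq:code1}.

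The step I expect to be the crux is the first one: one must argue carefully that the quantity driving stability is the SNR of the combined reconstruction measured against the \emph{nominal} signal variance $\sigma_v^2 = \gamma\sigma_q^2$ fixed by the operating point of the design, with the combined noise variance reduced to $\sigma^2/k'$ through the conditional independence of the encodings. Getting either the signal normalization or the noise normalization wrong changes the constant in \eqref{eq:code1}, so the independence hypothesis in Definition~\ref{def:encodings} is precisely what makes the noise variance scale as $\sigma^2/k'$ and thereby lets $k'$ received descriptions synergistically lift the SNR above the stability threshold.
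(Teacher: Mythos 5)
Your proposal is correct and follows essentially the same route as the paper's own proof: compute the combined noise variance $\sigma^2/k'$ from the independence of the $q_i$, impose the stability threshold $k'\sigma_v^2/\sigma^2 \geq \|S-1\|^2$, and eliminate $\sigma_v^2$ via \eqref{eq:gamma} using $\sigma_v^2 = \gamma\sigma_q^2$ to arrive at \eqref{eq:code1}. The only (immaterial) discrepancy is the strict versus non-strict inequality at the stability floor, an inconsistency already present between the paper's proof and its lemma statement.
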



\begin{proof}
The variance of $\frac{1}{k'}(w_{i_1} + \cdots + w_{i_{k'}})$ is $(k')^{-1}\sigma^2$ for any subset of $k'$ encodings. The resulting $\mathrm{SNR}$  $\gamma' = k' \sigma_v^2 \sigma^{-2}$, when combining $k'$ descriptions, needs to satisfy:
\begin{align} \label{eq:1}
\gamma' = \frac{k'\sigma_v^2}{\sigma^2} > \|S-1\|^2,
\end{align}
since $\|S-1\|^2$ is the minimal $\mathrm{SNR}$ required to guarantee stability. We now use that $\gamma \sigma_q^2= \sigma_v^2$, and from \eqref{eq:gamma} we get:
\begin{align}\label{eq:sv}
 \sigma_v^2 =  \gamma (\gamma - \|S-1\|^2)^{-1}\|L_yP_{21}S\|^{2}.
\end{align}
Inserting into \eqref{eq:1} and re-arranging terms leads to:
\begin{align}\label{eq:sn2}
\sigma^2 < \gamma k' \|S-1\|^{-2} (\gamma - \|S-1\|^2)^{-1}\|L_yP_{21}S\|^{2},
\end{align}
which leads to \eqref{eq:code1}. 
\end{proof}


The following lemma provides a lower bound on the sum-rate required for a $(k,k')$ stabilizing code based on independent encodings. We note that if one is not interested in the performance when receiving less than $k'$ descriptions, then the sum-rate can generally be further reduced by use of distributed source coding techniques such as Slepian-Wolf coding \cite{slepian:1973}. However, at low coding rates, the bound becomes asymptotically optimal as is shown by Lemma~\ref{lem:eta_indp}. 

\vspace{2mm}

\begin{lemma}
\emph{The minimum sum-rate $R_S$ of a $(k,k')$ stabilizing code based on independent encodings for the system $(F,L,P,\gamma)$ is:}
\begin{equation}
R_S \geq \frac{k}{2}\log_2\bigg(1 + \frac{\|S+1\|^2}{k'}\bigg).
\end{equation}
\end{lemma}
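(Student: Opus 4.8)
The plan is to bound the sum-rate by $R_S=\sum_{i=1}^{k}R_i$ and to lower bound each per-description rate $R_i$ separately, mirroring the structure of the preceding lemma. From the entropy-coded subtractive-dither model of Section~III, a single Gaussian description with reconstruction $w_i$ and fresh quantization noise $q_i$ of variance $\sigma^2$ costs $R_i=\tfrac12\log_2(1+\hat\gamma)$, where $\hat\gamma$ is the reconstruction $\mathrm{SNR}$ of that one description as seen through the loop of Fig.~\ref{fig:linear_system}. First I would record the well-posedness identity implied by \eqref{eq:S}, namely that the instantaneous gain satisfies $S_0=\langle S,1\rangle=1$, whence $\|S\|^2=1+\|S-1\|^2$ and $\|S+1\|^2=\|S\|^2+3$. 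This identity is the algebraic bridge that turns closed-loop reconstruction variances into the $\|S\pm1\|^2$ norms that appear in the statement, and it is also what makes the per-sample cross term between the shared signal $v$ and the fresh noise $q_i$ vanish.

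Next I would exploit the independent-encoding hypothesis of Definition~\ref{def:encodings}: since $w_1,\dots,w_k$ are conditionally independent given $v$ with a common noise variance $\sigma^2$, averaging any $k'$ of them drives the effective quantization-noise variance down to $\sigma^2/k'$, exactly as in the proof of the previous lemma. Consequently the reconstruction $\mathrm{SNR}$ obtained from any $k'$ descriptions is $k'\hat\gamma$. The entire content of the bound is therefore the condition that $k'\hat\gamma$ clear the minimal $\mathrm{SNR}$ needed for a faithful, stabilizing reconstruction, after which $R_S=kR_i=\tfrac{k}{2}\log_2(1+\hat\gamma)$ and the claimed inequality follows by substituting the boundary value of $\hat\gamma$.

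The crux is identifying that threshold. Whereas the signal $v=(S-1)q+SL_yP_{21}d$ carries the $\|S-1\|^2$ weighting that governs mere stability (cf.\ \eqref{eq:gamma}--\eqref{eq:sv}), the quantity that governs the \emph{coding rate} of the reconstruction is the loop signal together with its own re-injected noise; tracking $w_i$ around the feedback path and invoking $S_0=1$, this carries the $\|S+1\|^2$ weighting instead. I would show that the minimal reconstruction $\mathrm{SNR}$ for a stabilizing decode is $\|S+1\|^2$, so that a $(k,k')$ stabilizing code must satisfy $k'\hat\gamma\ge\|S+1\|^2$, i.e.\ $\hat\gamma\ge\|S+1\|^2/k'$. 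Combining this with the cap on $\sigma^2$ from \eqref{eq:code1} (larger $\sigma^2$ is infeasible, and $R_i$ is monotone decreasing in $\sigma^2$, so the boundary is the minimizer) gives $R_S\ge\tfrac{k}{2}\log_2\!\big(1+\|S+1\|^2/k'\big)$.

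The hard part will be this variance computation: establishing that the constant multiplying $S$ is exactly $+1$, rather than the $-1$ one would get by naively treating each $w_i$ as signal-plus-independent-noise, and checking that the factor $1/k'$ lands in the right place. This is purely closed-loop correlation bookkeeping — how the fresh noise $q_i$ is reflected back through $S$, and how the combined reconstruction over $k'$ descriptions redistributes the individual $q_i$ — and it is precisely here that $\langle S,1\rangle=1$ must be applied carefully. Before committing to the constant I would re-derive the combined $\mathrm{SNR}$ $\tilde\gamma$ two independent ways and verify consistency with the efficiency definition \eqref{eq:eta}, since the same reconstruction-$\mathrm{SNR}$ interpretation must also drive the asymptotic-efficiency claim that follows.
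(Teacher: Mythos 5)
Your structural skeleton is exactly the paper's: averaging any $k'$ conditionally independent descriptions drives the effective noise variance down to $\sigma^2/k'$, the stabilizability requirement caps $\sigma^2$, and substituting the boundary value into $R_S=\frac{k}{2}\log_2(1+\sigma_v^2/\sigma^2)$ yields the bound. But the pivotal step you yourself flag as ``the hard part'' --- showing that the minimal reconstruction $\mathrm{SNR}$ for a stabilizing decode is $\|S+1\|^2$ --- would fail, because no such threshold exists. The minimum $\mathrm{SNR}$ guaranteeing stability is $\|S-1\|^2$, obtained in the background section from \eqref{eq:gamma} in the limit $\sigma_q^2\to\infty$; it is the same threshold used in \eqref{eq:code1}, in the paper's own proof of this lemma (whose final display reads $R_S\geq\frac{k}{2}\log_2\left(1+(k')^{-1}\|S-1\|^2\right)$), and in Lemma~\ref{lem:eta_indp}, whose efficiency formula is only consistent with a per-description $\mathrm{SNR}$ floor of $\|S-1\|^2/k'$. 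The ``$+$'' in the lemma statement is simply a sign typo, and you were misled into reverse-engineering a justification for it. Indeed, under your own identity $\langle S,1\rangle=S_0=1$ one has $\|S+1\|^2=\|S-1\|^2+4$, a strictly larger constant that does not vanish even for a stable plant ($\|S-1\|=0$); had you carried out the consistency check against \eqref{eq:eta} that you propose at the end, the asymptotic claim $\eta\to 1$ in Lemma~\ref{lem:eta_indp} would have immediately contradicted a $\|S+1\|^2$ threshold.

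Your plan is also internally inconsistent on this point: you propose to combine the threshold $k'\hat\gamma\geq\|S+1\|^2$ with ``the cap on $\sigma^2$ from \eqref{eq:code1},'' but that cap encodes precisely the condition $k'\sigma_v^2/\sigma^2>\|S-1\|^2$, so invoking it delivers the $\|S-1\|^2$ bound directly and leaves no room for the claimed closed-loop re-injection effect. In the model of Fig.~\ref{fig:linear_system} the combined reconstruction is $w=v+\bar q$ with $\bar q$ the averaged noise, and stability requires exactly $\sigma_v^2/\mathrm{var}(\bar q)>\|S-1\|^2$ as in \eqref{eq:1}; there is no separate ``faithful reconstruction'' rate threshold. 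Once $\|S+1\|^2$ is corrected to $\|S-1\|^2$, the remainder of your argument is the paper's proof nearly verbatim (note in passing that the paper's proof has its own slip, writing $\mathrm{SNR}=k\sigma_v^2/\sigma^2$ where $k'\sigma_v^2/\sigma^2$ is meant, consistent with \eqref{eq:s2lb}).
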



\begin{proof}
Let $\sigma^2$ be the variance of the coding noise for a single description of the $(k,k')$ stabilizing code. Then, the resulting variance when linearly combining $k'$ descriptions is $\sigma^2/k'$. Thus, 
$\mathrm{SNR} = k\sigma_v^2/ \sigma^2 \geq  \|S-1\|^2$, where the inequality follows since $\|S-1\|^2$ is the minimum $\mathrm{SNR}$ that guarantees stability. Isolating $\sigma^2$ leads to:
\begin{equation}\label{eq:s2lb}
\sigma^2 \leq \|S-1\|^{-2} k' \sigma_v^2. 
\end{equation}
We can now express the sum-rate in terms of $\sigma^2$, that is:
\begin{align*}
R_S = \frac{k}{2}\log_2(1 + \frac{\sigma_v^2}{\sigma^2}) 
\geq \frac{k}{2}\log_2(1 + (k')^{-1}\,\|S-1\|^{2}).
\end{align*}
\end{proof}


\begin{lemma}\label{lem:eta_indp}
\emph{Consider the system $(F,L,P,\gamma)$. The efficiency of a minimum sum-rate $(k,k')$ stabilizing code based on independent encodings is given by:
\begin{equation}
\eta =  \frac{\log_2(1 + k(k')^{-1}\,\|S-1\|^{2}) }
{k\log_2(1 + (k')^{-1}\,\|S-1\|^{2})},
\end{equation}
and the code is asymptotically efficient in the sense of:}
\begin{equation}
\lim_{\|S-1\|^2 \to 0} \eta = 1.
\end{equation}
\end{lemma}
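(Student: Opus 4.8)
The plan is to first pin down the two SNR quantities $\hat\gamma$ and $\tilde\gamma$ entering the efficiency definition \eqref{eq:eta}, specialized to the minimum sum-rate independent-encoding code, and then to substitute them into \eqref{eq:eta} and evaluate the resulting limit. Since each description carries coding noise of variance $\sigma^2$, a single description yields $\hat\gamma = \sigma_v^2/\sigma^2$, while linearly averaging all $k$ independent encodings reduces the noise variance to $\sigma^2/k$ (because the $q_i$ are mutually independent with common variance $\sigma^2$), so that $\tilde\gamma = k\sigma_v^2/\sigma^2 = k\hat\gamma$. This is exactly the mechanism already exploited in the proofs of the two preceding lemmas.

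Next I would invoke the minimum sum-rate condition. The previous lemma establishes $\sigma^2 \le \|S-1\|^{-2}k'\sigma_v^2$, and the \emph{minimum} sum-rate code is the one attaining equality, $\sigma^2 = \|S-1\|^{-2}k'\sigma_v^2$, since the sum-rate $R_S = \tfrac{k}{2}\log_2(1+\sigma_v^2/\sigma^2)$ is decreasing in $\sigma^2$. Substituting then gives $\hat\gamma = (k')^{-1}\|S-1\|^2$ and hence $\tilde\gamma = k(k')^{-1}\|S-1\|^2$. Plugging these two expressions into \eqref{eq:eta} immediately reproduces the claimed closed form for $\eta$.

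For the asymptotic statement I would set $x \triangleq (k')^{-1}\|S-1\|^2$ and rewrite $\eta = \log_2(1+kx)\big/\big(k\log_2(1+x)\big)$. As $\|S-1\|^2 \to 0$ we have $x \to 0$, and both numerator and denominator vanish, producing a $0/0$ indeterminate form. I would resolve it with a first-order expansion $\log_2(1+u) = u/\ln 2 + O(u^2)$, so that the numerator equals $kx/\ln 2 + O(x^2)$ and the denominator equals $kx/\ln 2 + O(x^2)$, whence their ratio tends to $1$. Equivalently, a single L'Hôpital step in $x$ gives $\lim_{x\to 0}\frac{k/(1+kx)}{k/(1+x)} = 1$.

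The calculation is routine, so there is no serious obstacle; the only point requiring care is the limit itself, which is a $0/0$ form and therefore needs a Taylor expansion or L'Hôpital rather than naive substitution. It is worth emphasizing that the limiting value is independent of both $k$ and $k'$, which is precisely why the code is asymptotically efficient at every redundancy level as the plant approaches the stability boundary $\|S-1\|^2 \to 0$.
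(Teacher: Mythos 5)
Your proposal is correct and follows essentially the same route as the paper: both pin down $\hat\gamma = (k')^{-1}\|S-1\|^2$ and $\tilde\gamma = k(k')^{-1}\|S-1\|^2$ by taking equality in the bound $\sigma^2 \leq \|S-1\|^{-2}k'\sigma_v^2$ from the preceding lemma, substitute into \eqref{eq:eta}, and then obtain the limit from the first-order behavior of the logarithm (the paper's $\log(1+kc)\approx k\log(1+c)$ for small $c$ is exactly your Taylor/L'H\^opital step). Your write-up merely makes explicit two points the paper leaves implicit, namely that the minimum sum-rate is attained at equality because $R_S$ is decreasing in $\sigma^2$, and the rigorous justification of the small-argument approximation.
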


\begin{proof}
The first part follows immediately from \eqref{eq:s2lb}, since the $\mathrm{SNR}$ for a single description is $\sigma_v^2/\sigma^2$ and for $k$ descriptions it is $ k\sigma_v^2/\sigma^2$. The second part 
  follows since the logarithm of the number $1+kc$ is approximately linear in $k$ when $c\ll 1$, i.e., $\log(1+kc) \approx k\log(1+c)$ for small $c$. 
\end{proof}

For a $(F,L,P,\gamma)$ system, if $\|S-1\| = 0$ it means that the system is stable. Thus, the second part of Lemma~\ref{lem:eta_indp} considers the situation where the plant is either stable or nearly stable, i.e., the unstable poles are near the unit circle. In this case, the coding rates are arbitrary small, and the $k$ descriptions of the $(k,k')$ stabilizing code becomes mutually independent. Thus, there is no redundancy by using $k$ descriptions each of rate $R/k$ over a single description of rate $R$  \cite{ostergaard:2021}.


\begin{lemma}
\emph{Consider the system $(F,L,P,\gamma)$. The performance (in terms of $\sigma_e^2$) for this system when using $\ell \geq k'$ descriptions of a minimum sum-rate $(k,k')$ stabilizing code based on independent encodings is:}
\begin{align}\notag
\sigma_e^2 &= \|P_{11} + P_{12}K(1-P_{22}K)^{-1}P_{21}\|^2 \\
&\quad +
 \frac{k'\, \gamma  \|P_{12}FS\|^2 \|L_yP_{21}S\|^{2}}{\ell\, \|S-1\|^{2} (\gamma - \|S-1\|^2) }
 			,\quad \ell=k', \dotsc, k.
\end{align}
\end{lemma}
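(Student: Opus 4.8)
The plan is to reduce the computation to the single-description performance formula \eqref{eq:perf} by identifying the effective coding-noise variance produced by combining $\ell$ descriptions. Observe first that in \eqref{eq:perf} only the second term, $\|P_{12}FS\|^2\sigma_q^2$, depends on the code; the first term is fixed by the plant $P$ and the filters $(F,L)$. Hence the whole task is to determine the correct value of $\sigma_q^2$ when $\ell$ of the independent encodings are received and linearly combined, under the minimum sum-rate choice of parameters.

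First I would compute the effective noise variance. Exactly as in the proof of the first Lemma, averaging any $\ell$ of the independent encodings $w_i = v + q_i$ produces $v + q_{\mathrm{eff}}$, where $q_{\mathrm{eff}} = \frac{1}{\ell}\sum_j q_{i_j}$ has variance $\sigma^2/\ell$, since the $q_i$ are mutually independent with common variance $\sigma^2$. Because the combined signal is again of the form ``$v$ plus white Gaussian noise'', it plugs directly into the linear model of Fig.~\ref{fig:linear_system}, and the system behaves as a single-description system with $\sigma_q^2 = \sigma^2/\ell$.

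Next I would fix $\sigma^2$ to its minimum sum-rate value. The sum-rate $\frac{k}{2}\log_2(1 + \sigma_v^2/\sigma^2)$ is strictly decreasing in $\sigma^2$, so minimizing it amounts to taking $\sigma^2$ as large as the stabilizing constraint permits; this is the equality case of \eqref{eq:s2lb}, i.e. $\sigma^2 = \|S-1\|^{-2}k'\sigma_v^2$. Substituting $\sigma_v^2$ from \eqref{eq:sv} then yields
\begin{equation}
\sigma_q^2 = \frac{\sigma^2}{\ell} = \frac{k'\,\gamma\,\|L_yP_{21}S\|^2}{\ell\,\|S-1\|^2(\gamma - \|S-1\|^2)},
\end{equation}
and inserting this into the second term of \eqref{eq:perf} gives the claimed expression.

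The step I expect to require the most care is the justification that the minimum sum-rate code operates at equality in \eqref{eq:s2lb} (the maximal admissible $\sigma^2$), together with the observation that averaging is the correct way to combine the independent encodings so that the effective variance is $\sigma^2/\ell$ and the combined signal retains the $v+\mathrm{noise}$ structure needed to reuse \eqref{eq:perf}. One should also check the range $\ell = k',\dotsc,k$ is consistent, since at this $\sigma^2$ the combined \SNR{} equals $\ell\|S-1\|^2/k' \geq \|S-1\|^2$ for $\ell \geq k'$, confirming stability. The remaining manipulations are routine substitutions.
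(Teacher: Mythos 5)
Your proposal is correct and takes essentially the same route as the paper's own (one-line) proof: the paper likewise substitutes the effective noise variance $\sigma_q^2 = \sigma^2/\ell$, with $\sigma^2$ at its maximal admissible value from \eqref{eq:sn2} (equivalently, equality in \eqref{eq:s2lb} combined with \eqref{eq:sv}), into the performance expression \eqref{eq:perf}. Your added justifications—that minimum sum-rate forces the equality case since the sum-rate is decreasing in $\sigma^2$, and that the combined $\mathrm{SNR}$ equals $\ell\|S-1\|^2/k' \geq \|S-1\|^2$ for $\ell \geq k'$—merely make explicit what the paper leaves implicit.
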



\begin{proof}
Follows from \eqref{eq:perf} by inserting \eqref{eq:sn2} and the fact that the noise variance satisfies $\frac{\sigma^2}{\ell}$ for $\ell=1,\dotsc, k$. 
\end{proof}

\subsection{Example 1}
Consider a plant $P$ that provides the following input-output relationship between $(u,d)$ and $y$:
\begin{equation}
y = \frac{0.165}{(z-2)(z-0.5789)}( u + d), \quad e=y,
\end{equation}
where the external disturbance $d$ has a standard normal distribution. Notice that the plant has an unstable pole at $z=2$, which implies that the minimum $\mathrm{SNR}$ required for stability is $\|S-1\|^2 = 3$, and equivalently the minimum coding rate is $0.5\log_2(1 + 3) = 1$ bit. For this plant, we can choose a particular $\gamma$ and find the optimal filters $L,F$ and associated $S,K$ by using the method described in \cite{silva:2016}. From these we can find the performance $\sigma_e^2$ using \eqref{eq:perf} and coding rate $R = 0.5\log_2(1+ \gamma)$. 
Changing $\gamma$ leads to another set of $L,F,K,S$ variables and different performances and coding rates.


Let us now design a $(4,2)$ stabilizing code, so that receiving any $2$ descriptions implies that the minimum $\mathrm{SNR}$ requirement is fulfilled. We choose $\sigma^2$ so that the resulting $\mathrm{SNR}$ is $\frac{\ell}{4}\gamma$, when linearly combining $\ell=1,\dotsc, 4$ descriptions. For $\gamma = 7.2$, we obtain:
$\mathrm{SNR} = 1.8, 3.6, 5.4$, and $7.2$ for $\ell=1,\dotsc, 4$, respectively. For $\ell =2$ it is clear that the resulting $\mathrm{SNR}$ is greater than the minimum of $3$, and we therefore have a $(4,2)$ stabilizing code. 

The coding rate per description is $0.5\log_2(1 + 1.8) = 0.74$ bits, and the sum-rate is $R_S = 2.96$ bits. The coding rate required for a single-description system to achieve $\mathrm{SNR} = 7.14$ is $R=0.5\log_2( 1 + 7.14) = 1.51$ bits. Thus, the efficiency is $\eta = 1.51/2.96 = 0.51$. For comparison, a repetition code with 4 descriptions would have an efficiency of $\eta = 0.25$. 
 
In Fig.~\ref{fig:SNR}, we have illustrated the resulting $\mathrm{SNR}$ when combining the $\ell=1,\dotsc, 4$ descriptions as a function of $\gamma$. It can be seen that a $(4,1)$ stabilizing code can be obtained for $\gamma>10$ dB, and a (4,2) stabilizing code for $\gamma>7$ dB. Fig.~\ref{fig:efficiency} shows the efficiency as a function of $\gamma$. The efficiency is monotonically decreasing in $\gamma$. At very low $\gamma$ which corresponds to small coding rates, the efficiency is highest.

\begin{figure}[t]
\begin{center}
\includegraphics[width=7cm]{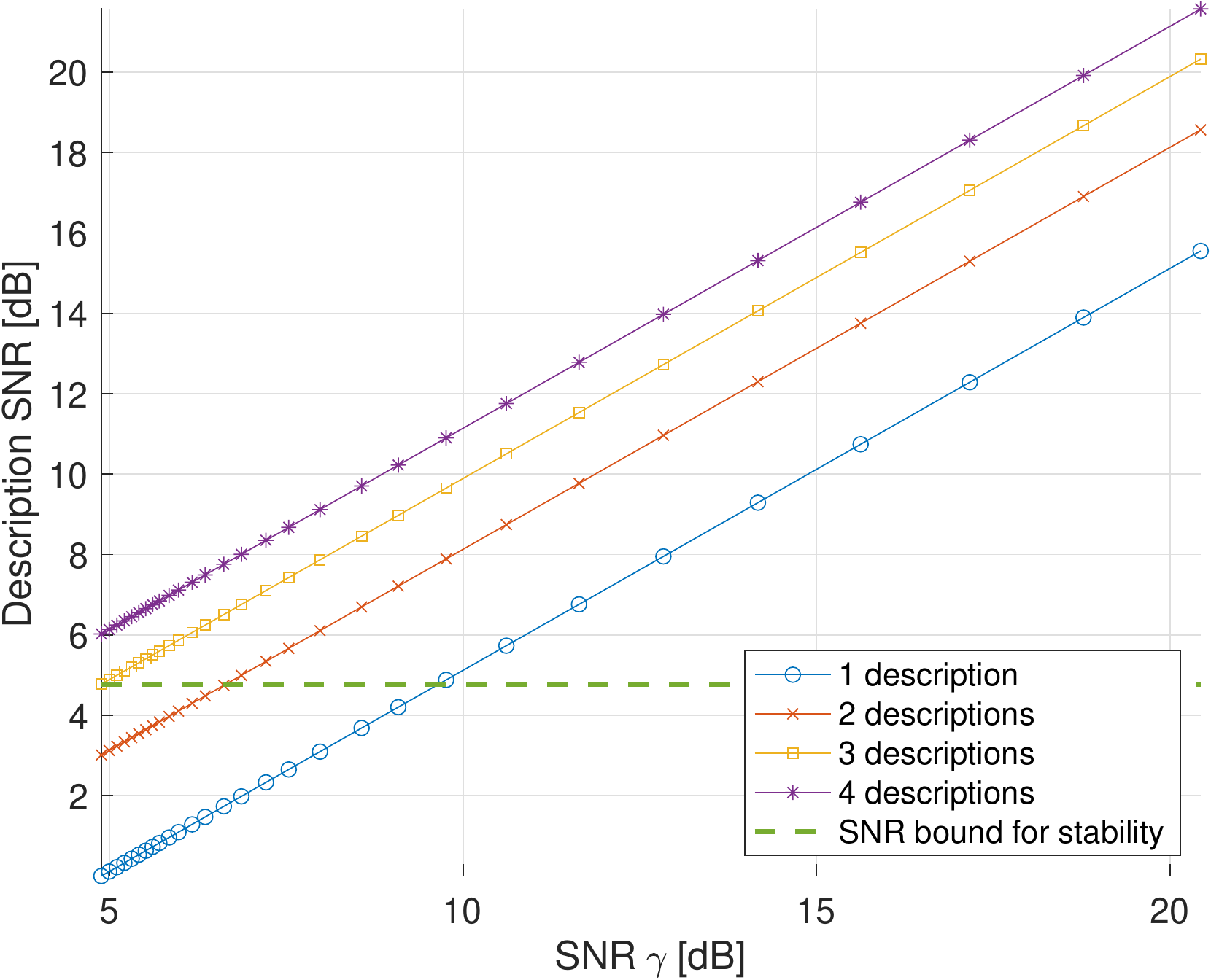}
\caption{The resulting $\mathrm{SNR}$ obtained with the $(4,2)$ stabilizing code of Example 1, when combining $\ell = 1,\dotsc, 4$ descriptions.}
\label{fig:SNR}
\end{center}
\end{figure}

\begin{figure}[t]
\begin{center}
\includegraphics[width=6cm]{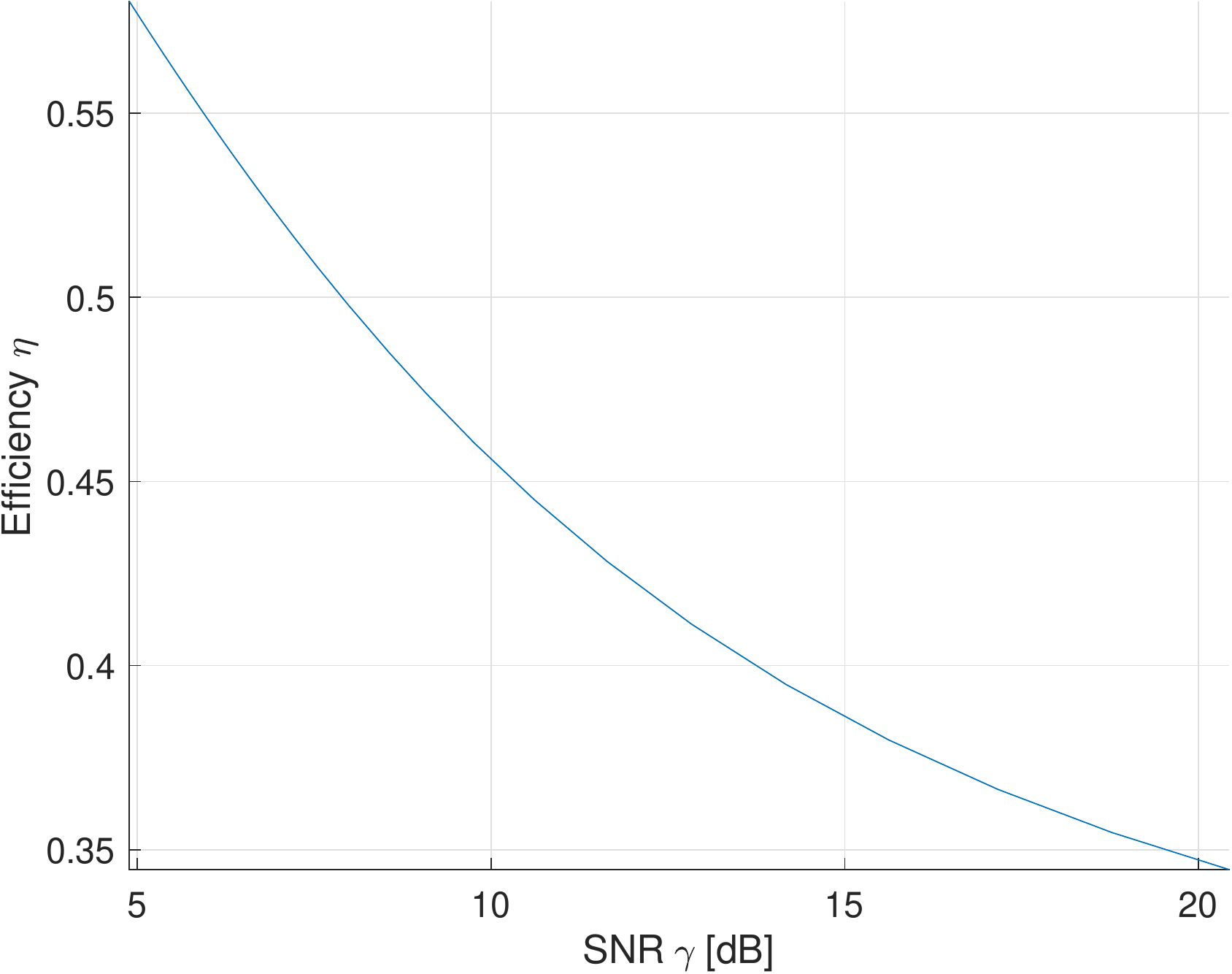}
\caption{The efficiency $\eta$ of the $(4,2)$ stabilizing code of Example 1 as a function of $\gamma$. }
\label{fig:efficiency}
\end{center}
\end{figure}

\subsection{Stabilizing codes based on multiple descriptions}
It is possible to introduce correlation between the quantization noises $q_i, i=1,\dotsc, k,$ of the encodings in Definition~\ref{def:encodings}, which makes it possible to exploit the benefits of multiple descriptions. Of course, zero correlation is a special case of multiple descriptions, which is usually referred to as the no excess marginal rate case~\cite{Zhang1995MultipleDS}. 
When introducing correlation, the sum-rate is no longer simply just given by the sum of the optimal marginal (description) rates. The sum-rate also becomes a function of the amount of correlation introduced; the greater (negative) correlation, the greater  sum-rate \cite{ozarow:1980}. 

\begin{lemma}
$\mathbf{(k,k')}$ \textbf{Stabilizing Code Based on Multiple Descriptions.}
\emph{Consider the system $(F,L,P,\gamma)$, which is illustrated in Fig.~\ref{fig:linear_system}. Let $v$ be Gaussian and let $w_i = v + q_i, i = 1, \dotsc, k$, 
where $q_i, i=1,\dotsc, k,$ are zero-mean Gaussian distributed with variance $\sigma^2$, and pairwise correlated with correlations coefficient $-\frac{1}{k-1}<\rho \leq 0$. 
If for some $k'$ and $\rho$, the common variance $\sigma^2$ satisfies
\begin{equation}\label{eq:code2}
\sigma^2\leq  \frac{\gamma k'\|L_yP_{21}S\|^{2}}{ \|S-1\|^{2} (\gamma - \|S-1\|^2)(1+(k'-1)\rho)},
\end{equation}
where $S$ is given in \eqref{eq:S}, then $w_i, i=1,\dotsc, k$, form a $(k,k')$ stabilizing code for the system $(F,L,P,\gamma)$. }
\end{lemma}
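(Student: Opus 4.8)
The plan is to mirror the proof of the $(k,k')$ stabilizing code based on independent encodings (the preceding lemma), since the construction is identical except that the coding noises $q_i$ are now pairwise correlated rather than mutually independent. The only genuinely new ingredient is the effect of that correlation on the variance of the linearly combined noise. First I would take any subset of $k'$ received descriptions and form the average $\frac{1}{k'}(w_{i_1}+\cdots+w_{i_{k'}})$. Writing $w_{i_j}=v+q_{i_j}$, this estimate equals $v+\bar q$, where $\bar q=\frac{1}{k'}\sum_{j=1}^{k'}q_{i_j}$ is the residual coding noise combining the $k'$ received descriptions.

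The key step is to compute $\mathrm{Var}(\bar q)$ under the pairwise-correlation model. Since each $q_{i_j}$ has variance $\sigma^2$ and each distinct pair has covariance $\rho\sigma^2$, expanding the variance of the sum yields $k'$ diagonal contributions $\sigma^2$ and $k'(k'-1)$ off-diagonal contributions $\rho\sigma^2$, so that $\mathrm{Var}(\bar q)=\frac{\sigma^2(1+(k'-1)\rho)}{k'}$. I would note that the admissible range $-\frac{1}{k-1}<\rho\le 0$ guarantees $1+(k'-1)\rho>0$ for every $k'\le k$: multiplying $\rho>-\frac{1}{k-1}$ by $k'-1\ge 0$ gives $(k'-1)\rho>-\frac{k'-1}{k-1}\ge -1$, so the combined noise variance is well-defined and strictly positive (equivalently, the covariance matrix of the received subset is positive definite). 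The resulting combined SNR is then $\gamma'=\sigma_v^2/\mathrm{Var}(\bar q)=\frac{k'\sigma_v^2}{\sigma^2(1+(k'-1)\rho)}$, which collapses to the independent-encoding value $k'\sigma_v^2/\sigma^2$ when $\rho=0$, consistent with the no-excess-marginal-rate remark preceding the lemma.

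With $\gamma'$ in hand, the remainder is verbatim the argument of the independent case. I would impose the stability requirement $\gamma'>\|S-1\|^2$, substitute the expression $\sigma_v^2=\gamma(\gamma-\|S-1\|^2)^{-1}\|L_yP_{21}S\|^2$ obtained from \eqref{eq:gamma}, and isolate $\sigma^2$, which produces exactly the bound \eqref{eq:code2}; since any $k'$-subset yields the same variance by symmetry of the correlation structure, the bound guarantees stability regardless of which $k'$ descriptions arrive.

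The main obstacle is purely the bookkeeping of the correlation term: one must confirm that expanding the double sum produces precisely the factor $1+(k'-1)\rho$, and then verify that this factor remains strictly positive across the stated range of $\rho$ (otherwise the combined variance, and hence the SNR, would be ill-defined). Once that factor is correctly established and shown positive, the stability derivation carries over from the previous lemma with no further work.
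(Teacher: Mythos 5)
Your proposal is correct and follows essentially the same route as the paper's proof: compute the variance of the averaged noise as $\frac{\sigma^2}{k'}(1+(k'-1)\rho)$, impose the stability condition $\mathrm{SNR} > \|S-1\|^2$, substitute $\sigma_v^2 = \gamma(\gamma-\|S-1\|^2)^{-1}\|L_yP_{21}S\|^2$, and isolate $\sigma^2$ to obtain \eqref{eq:code2}. Your added verification that $1+(k'-1)\rho>0$ over the admissible range of $\rho$ is a small but welcome detail that the paper leaves implicit.
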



\begin{proof}
We need to ensure that $\mathrm{SNR}>\|S-1\|^2$, when receiving at least $k'$ descriptions. 
The  noise variance when combining any $k'$ descriptions is given by:
\begin{align}\label{eq:var}
\mathrm{var}\bigg( \frac{1}{k'} \sum_{i=1}^{k'} q_i \bigg) = \frac{\sigma^2}{k'}(1+(k'-1)\rho).
\end{align}
Using \eqref{eq:var},  the $\mathrm{SNR}$ is given by:
\begin{align}
\frac{\sigma_v^2}{\frac{\sigma^2}{k'}(1+(k'-1)\rho)} \geq \|S-1\|^2.
\end{align}
Isolating $\sigma^2$ and inserting \eqref{eq:sv} leads to:
\begin{align}
\sigma^2 &\leq k'\|S-1\|^{-2}(1+(k'-1)\rho)^{-1}\sigma_v^{-2} \\ \notag
&= k'\|S-1\|^{-2}(1+(k'-1)\rho)^{-1}
\gamma (\gamma - \|S-1\|^2)^{-1} \\
&\quad\times \|L_yP_{21}S\|^{2}.
\end{align}

\end{proof}

Let $\rho \in (\frac{-1}{k-1},0]$ be the common correlation coefficient between all noise pairs $q_i,q_j, \forall i\neq j$, and let $\sigma^2$ be their common variance. If we are only interested in the performance when receiving $k'$ descriptions or all $k$ descriptions, then the sum-rate $R_S$ can be explicitly expressed \cite{pradhan:2004}:
\begin{align}\notag
R_S  &=\frac{1}{2k'}\log_2 \bigg( \frac{k' + \sigma^2(1+(k'-1)\rho)}{\sigma^2(1-\rho)} \bigg) \\
&\quad + \frac{1}{2k}\log_2 \bigg(\frac{1-\rho}{1+(k-1)\rho} \bigg)\bigg),
\end{align}
where it is assumed the source is standard normal.

\section{Simulation Study}
We consider the same system as that of Example 1, and we will assume i.i.d.\ packet losses.  The encoder is informed about the packet loss probability but does not know when an erasure occur. Knowledge of the packet loss probability makes it possible to design an efficient entropy coder (lossless coder). 

We will be using a subtractively dithered scalar quantizer, which is a stochastic quantizer that provides different outputs, when encoding the same source multiple times \cite{zamir:2014}. We will use this to form the $k$ \emph{independent} encodings.
\begin{figure}[t]
\begin{center}
\includegraphics[width=7cm]{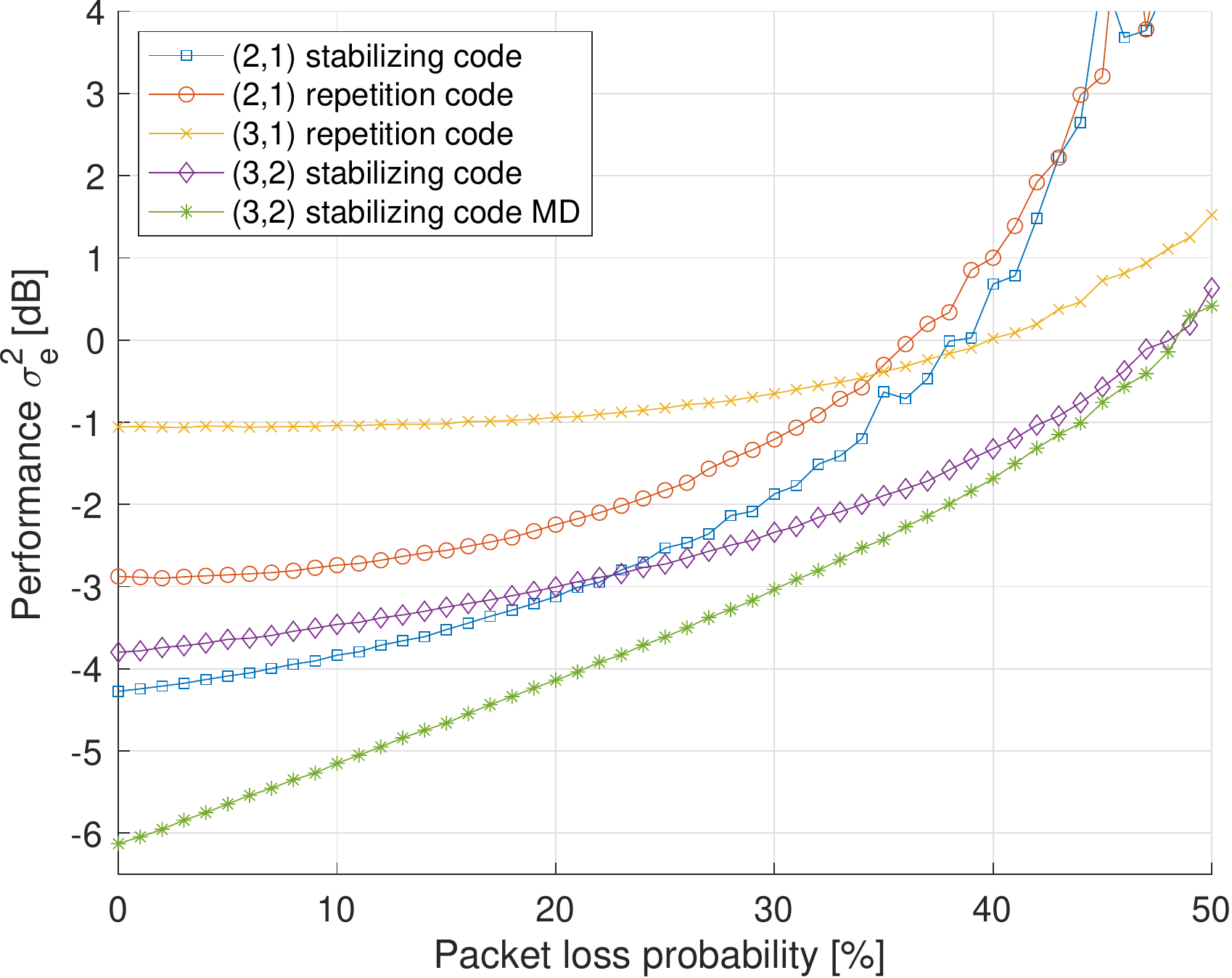}
\caption{ The performance of stabilizing and repetition codes as a function of packet-loss probability.}
\label{fig:performance_all}
\end{center}
\end{figure}


We encode the output $v$ of Fig.~\ref{fig:linear_system} using a  subtractively dithered scalar quantizer $\mathcal{Q}_\Delta$ with step-size $\Delta$ to obtain: 
\begin{equation}
w_i = \mathcal{Q}_\Delta(v + \xi_i)-\xi_i, i = 1,\dotsc, k. 
\end{equation}
where $\xi$ denotes the dither signal. We choose the step-size $\Delta$ so that the resulting $\mathrm{SNR}$ when using only a single description $w_i$ is $1.76$, which is below that required for stability. Combining any two descriptions yields $\mathrm{SNR} = 3.57$ and combining all three yields $\mathrm{SNR} = 5.29$. Thus, using at least two descriptions is sufficient to stabilize the system. Based on this we design a $(3,2)$ stabilizing code, which for each input sample produces three outputs using the quantizer three times. The theoretical efficiency of this scheme is $\log2(1+ 5.29)/(3\log_2(1+1.76)) = 0.6$. In practice, we suffer from a rate loss due to using a scalar quantizer. The theoretical rate is $1/2\log_2(1 + 1.76) = 0.73$. However, transmitting less than one bit per sample is only possible when encoding vectors. The measured entropy of the quantized output is $1.57$ bits per description.

 We have plotted the performance of the $(3,2)$ stabilizing code in Fig.~\ref{fig:performance_all} as a function of the packet-loss probability. We assume i.i.d.\ packet losses, and simply average the received descriptions to form the reconstruction. Also shown is the performance when transmitting one of the descriptions three times. This corresponds to a $(3,1)$ repetition code having similar sum-rate as the $(3,2)$ stabilizing code. For each packet-loss probability, the performance and rates are averages over a realization having $10^6$ samples. It can be observed that using a stabilizing code is up till $3$ dB better than a repetition code at low packet-loss probabilities.

We also show in Fig.~\ref{fig:performance_all} the performance of a $(2,1)$ stabilizing code, which is compared to a $(2,1)$ repetition code. The $\mathrm{SNR}$ is $3.45$ and $6.91$, when using 1 or 2 descriptions, respectively, of the $(2,1)$ stabilizing code. The measured output entropy after scalar quantization is in this case $1.76$ bits per description, and the sum-rate is $3.52$ bits. 

Finally, we design a $(3,2)$ stabilizing code based on multiple descriptions. It is not straight-forward to obtain correlated noises between the descriptions, and we use here the approach described in~\cite{ostergaard:2006}, which is based on nested lattices and index assignments. The source is first quantized using a fine-grained quantizer referred to as the central quantizer. Then, a one-to-many map is applied, which maps the quantized value to $k$ points in a nested (coarser) lattice. If all $k$ coarser points are received, the map is invertible and the point of the central quantizer is used for reconstruction. If less than $k$ descriptions are received, the reconstruction is given by the average of the received points in the coarser lattice~\cite{ostergaard:2006}. We are using a nesting factor of 5, and the resulting pairwise correlation coefficient between the descriptions is $\rho = -0.41$. The $\mathrm{SNR}$ of a single description is $1.68$ and that of two descriptions is $5.6$, which is above the critical value for stability. The $\mathrm{SNR}$ when all descriptions are used is $12.0$. 
The step-size of the fine lattice is chosen such that the resulting bitrate is similar to that of the $(3,2)$ stabilizing code based on independent encodings. 

It can be seen in Fig.~\ref{fig:performance_all} that stabilizing codes outperforms repetition coding. Moreover, using MD coding when constructing the stabilizing codes is better than using independent encodings, except at very low bitrates or very high packet-loss rates.

\section{Conclusions}
A new construction of error correction codes were proposed, which takes the stability of the plant into account. For linear systems with scalar input and output, explicit designs were provided, and it was shown that there is a significant gain over using traditional repetition codes. Similar to repetition coding, the proposed codes do not add additional delays but operate on each sample at a time.



\section{Acknowledgments}

The author would like to thank Mohsen Barforooshan for discussions related to simulating the control system.

\bibliography{literature} 
\bibliographystyle{IEEEtran}

\end{document}